\LetLtxMacro{\ORIGselectlanguage}{\selectlanguage}
\DeclareRobustCommand{\selectlanguage}[1]{%
  \@ifundefined{alias@\string#1}
    {\ORIGselectlanguage{#1}}
    {\begingroup\edef\x{\endgroup
       \noexpand\ORIGselectlanguage{\@nameuse{alias@#1}}}\x}%
}
\newcommand{\definelanguagealias}[2]{%
  \@namedef{alias@#1}{#2}%
}
\newcommand{\beq}{\begin{equation}}
\newcommand{\eeq}{\end{equation}}
\def\ba{\begin{eqnarray}}
\def\ea{\end{eqnarray}}
\newtheorem{theorem}{Theorem}
\newtheorem{lemma}[theorem]{Lemma}
\theoremstyle{definition}
\DeclareMathOperator{\Tr}{Tr}
\begin{document}

\title{Scaling of variational quantum circuit depth for condensed matter systems}

\author{Carlos Bravo-Prieto}
\affiliation{Departament de F\'isica Qu\`antica i Astrof\'isica and Institut de Ci\`encies del Cosmos (ICCUB), Universitat de Barcelona, Mart\'i i Franqu\`es 1, 08028 Barcelona, Spain.}
\affiliation{Barcelona Supercomputing Center, Barcelona, Spain.}
\orcid{0000-0003-1041-2044}
\author{Josep Lumbreras-Zarapico}
\affiliation{Departament de F\'isica Qu\`antica i Astrof\'isica and Institut de Ci\`encies del Cosmos (ICCUB), Universitat de Barcelona, Mart\'i i Franqu\`es 1, 08028 Barcelona, Spain.}
\author{Luca Tagliacozzo}
\affiliation{Departament de F\'isica Qu\`antica i Astrof\'isica and Institut de Ci\`encies del Cosmos (ICCUB), Universitat de Barcelona, Mart\'i i Franqu\`es 1, 08028 Barcelona, Spain.}
\orcid{0000-0002-5858-1587}
\author{Jos\'{e} I. Latorre}
\affiliation{Departament de F\'isica Qu\`antica i Astrof\'isica and Institut de Ci\`encies del Cosmos (ICCUB), Universitat de Barcelona, Mart\'i i Franqu\`es 1, 08028 Barcelona, Spain.}
\affiliation{Center for Quantum Technologies, National University of Singapore, Singapore.}
\affiliation{Technology Innovation Institute, Abu Dhabi, UAE.}
\orcid{0000-0003-1702-7018}
\maketitle

\begin{abstract}
We benchmark the accuracy of a  variational quantum eigensolver based on a finite-depth quantum circuit encoding ground state of local Hamiltonians. We show that in gapped phases, the accuracy improves exponentially with the depth of the circuit. When trying to encode the ground state of conformally invariant Hamiltonians, we observe two regimes. A \emph{finite-depth} regime, where the accuracy improves slowly with the number of layers, and a \emph{finite-size} regime where it improves again exponentially. The cross-over between the two regimes happens at a critical number of layers whose value increases linearly with the size of the system. We discuss the implication of these observations in the context of comparing different variational ansatz and their effectiveness in describing critical ground states.
\end{abstract}

\section{Introduction}
 
 Future large-scale fault-tolerant quantum computers will allow to simulate quantum systems made by a large number of constituents, thus providing important insight on their properties ~\cite{Buluta, Brown, Georgescu}. In particular, they will allow to characterize ground and equilibrium states of those systems through appropriately designed quantum algorithms such as those proposed in Refs. ~\cite{cao2018quantum, Abrams, Berry, Verstraete, jordan2012quantum, temme2011}. However, as of today, such large-scale fault-tolerant quantum computers still do not exist.
 
Currently, noisy intermediate-scale quantum (NISQ) \cite{preskill2018NISQ} devices are already available in the labs. Nonetheless, it is not yet entirely clear what these devices can be used for. One proposal is to use them as a part of a hybrid classical-quantum machine, in which some of the computational tasks are performed on a classical computer that takes advantage of a small quantum co-processor in order to perform noisy linear algebra operations, and then, ideally, obtain quantum speedups. The variational quantum algorithms (VQAs) are a class of algorithms that use such hybrid devices. The general philosophy of a VQA is to define a parametrized quantum circuit whose architecture is dictated by the type and size of the quantum device that is available, and that depends on a set of classical parameters, {\sl e.g.} the angles of single-qubit gates encoding a rotation. These parameters can be optimized using quantum-classical optimization loops, by extremizing a cost function. The cost function is appropriately designed in such a way that its extrema encode the  solution of the specific optimization  problem we aim to solve. In this way, one hopes to find the best possible quantum algorithm that allows to perform the required task, given the available quantum resources. Several VQAs have already been proposed in the context of making NISQ computers practically useful for real world applications ~\cite{VQE, Kokail, VQEex, Jones, Li, Romero, QAQC, arrasmith2019variational, VQSD, VQSVD, vqls, holmes2019sim, sharma2019, carolan2019, mcardle2019evolution, perez2020}.

 Here, we focus on the variational quantum eigensolver (VQE)~\cite{VQE}, a VQA that is designed to provide an approximation to the ground state of many-body quantum systems using NISQ devices. The quantum circuit in the VQE algorithm tries to approximate a unitary transformation $U$ that rotates a product state $\ket{0}^{\otimes n}$ into the ground state $\ket{\psi_0}$ of a given Hamiltonian $H$ of a system with $n$ qubits.

We consider a special class of quantum circuits made by several layers of unitaries that act on a pair of contiguous qubits. The unitaries are chosen from a simple set of gates.
We review the reason why we use such structure and characterize its power numerically.  
For gapped Hamiltonians we can make a direct connection between our quantum circuit and perturbation theory and show how the accuracy of the ansatz increases exponentially with the number of layers. For critical systems, we  observe the appearance of two regimes, one where the physics is dictated by an effective correlation length induced by the number of layers of the circuit and another one where the correlation length is actually set by the system size as expected. 
In particular, we  discuss how the tension between the finite speed of propagation of the correlations consequence of  Lieb-Robinson bounds and the growth of entanglement in critical systems is responsible of the linear scaling with the size of the system of the critical number of layers $l^*(n)$  that determines the location of the cross-over between the two regimes. 
 
 \section{The variational quantum eigensolver}
 
 We denote by  $\tilde{U}$ the approximation of the unitary $U$ that should rotate the initial product state into the desired  ground state of $H$.   $\tilde{U}$ is obtained as a quantum circuit with finite depth that depends on a set of parameters $\vec{\theta}$. For any choice of $\vec{\theta}$, the quantum circuit  $\tilde{U}(\vec{\theta})$ acting on the product state $\ket{0}^{\otimes n}$ generates a trial wave function $\ket{\tilde{\psi}({\vec{\theta}})} = \tilde{U}(\vec{\theta})\ket{0}^{\otimes n}$.  Using a NISQ computer we can compute the  expectation value of the energy on that wave-function $E_{\vec{\theta}}=\bra{\tilde{\psi}({\vec{\theta}})} H \ket{\tilde{\psi}({\vec{\theta}})}$. At this stage, we can  use a classical optimization algorithm in order to find the values of the parameters $\vec{\theta}$ that minimize the energy, thus providing (for a gapped system) an approximation to the ground state. The classical optimization allows us to  extract 
 \begin{equation}
 \vec{\theta}_{\textrm{opt}}=\textrm{argmin}_{\vec{\theta}}  \bra{\tilde{\psi}({\vec{\theta}})} H \ket{\tilde{\psi}({\vec{\theta}})}\,.
 \end{equation}
In this way we can identify  $\ket{\psi_{\textrm{opt}}}\equiv\tilde{U}(\vec{\theta}_{\textrm{opt}})\ket{0}^{\otimes n}$ with the best possible approximation to the ground state of $H$, given the architecture of the quantum circuit we can implement on a NISQ device that approximates $U$.

 In order to make contact with practical implementations of VQE, here we will consider unitaries built from a finite set of gates, namely single-qubit rotations and two-qubit controlled operations. We thus are working in the framework of finding the best approximation to a unitary transformation given an elementary set of quantum gates. In this framework, there are analytical bounds on the error induced by approximating a unitary transformation with a finite number of elementary gates. The Solovay-Kitaev theorem states that an arbitrary unitary acting on $n$ qubits can be approximated with precision $\varepsilon$ by using at most order $\Theta(log^c(1/\varepsilon))$ elementary gates chosen appropriately from a universal set of quantum gates closed under inversion, where  $c\sim 3.97$ \cite{Da, NiCh}. Alternative versions of the theorem have lowered the value of $c$ \cite{KiShVy}, however there is an optimal value of $c=1$ \cite{NiCh, HaReCh}.
This theorem suggests that if our VQE is built from a set of universal gates, then by increasing the number of gates, we will able to arbitrarily reduce the error between our approximation to the ground state and the real ground state of the system.  

From the practical point of view of building the quantum circuit that best approximate $U$ from a given number of gates, the  
Solovay-Kitaev theorem is not very useful. For example, it does not suggest what is the geometry of the circuit we should choose. That is, it does not specify on which constituents each of the elementary gates act and how elementary gates should be concatenated. Choosing among all possible geometries the optimal is a hard problem. Given indeed e.g. $m$ two-body gates that are supposed to act on arbitrary pairs of two qubits out of the $n$ qubits of the systems, we can generate $n(n-1)^m$ in principle distinct quantum circuits.

In order to overcome this exponential scaling, here we take inspiration from perturbation theory. It is well known indeed that perturbation theory can be recast in terms of continuous unitary transformations \cite{wegner_1994,glazek_1993,glazek_1994,dusuel_2004}. In the context of topological order, indeed, these continuous unitary transformations have been used to define the quasi-adiabatic continuation \cite{hastings_2005}. Two states are in the same phase if there is a sequence of gapped local Hamiltonians, that allows to evolve one state into the other in a finite time. The evolution operator generated by these Hamiltonians, at least in a Trotter approximation, can be represented by a finite-depth quantum circuit \cite{huang2015}. These ideas have also been put on firm ground in \cite{cirac_2017}, where general theorems about the properties of such unitaries, including their causal cones, have been obtained. 

For these reasons, we focus on a fixed geometry of the network represented in  Fig. \ref{fig:ansatz}. It can either be interpreted as the Floquet evolution of a local Hamiltonian \cite{kos_2018} or as a Trotter approximation of continuous evolution by a local Hamiltonian defined on a line. In 1D, we can separate the terms of the Hamiltonian that act on even and odd links and obtain two sets, each made of mutually commuting gates. A full evolution step involves acting with both sets, and we identify the step with a layer of the circuit. The full ansatz involves concatenating several layers of these unitary gates.
As for the particular set of unitaries we consider, they are made out of single-qubit rotations $R_y(\theta)$, and control-$Z$ gates ($CZ$) that act on two contiguous qubits as shown in  Figure~\ref{fig:ansatz}.

A legitimate question is thus how accurate this geometry can be, and how close can the state we extract by running a VQE on our set of quantum circuits get to exact ground state of the system. Since we are dealing with finite systems, the Hamiltonians we are considering always have a gap (at least proportional to $1/n$). If $E$ is the expectation value of the energy on our trial wave-function, its distance from the ground state can be bounded as $\delta\le\frac{\epsilon}{\Delta E}$ with $\epsilon= E-E_0$ and $\Delta E$ being the gap of the Hamiltonian.   We will use the error in the ground state energy  $\epsilon$ as a measure of the quality of our circuit. 

\begin{figure}[h!]
 \centering
 \includegraphics[scale=0.115]{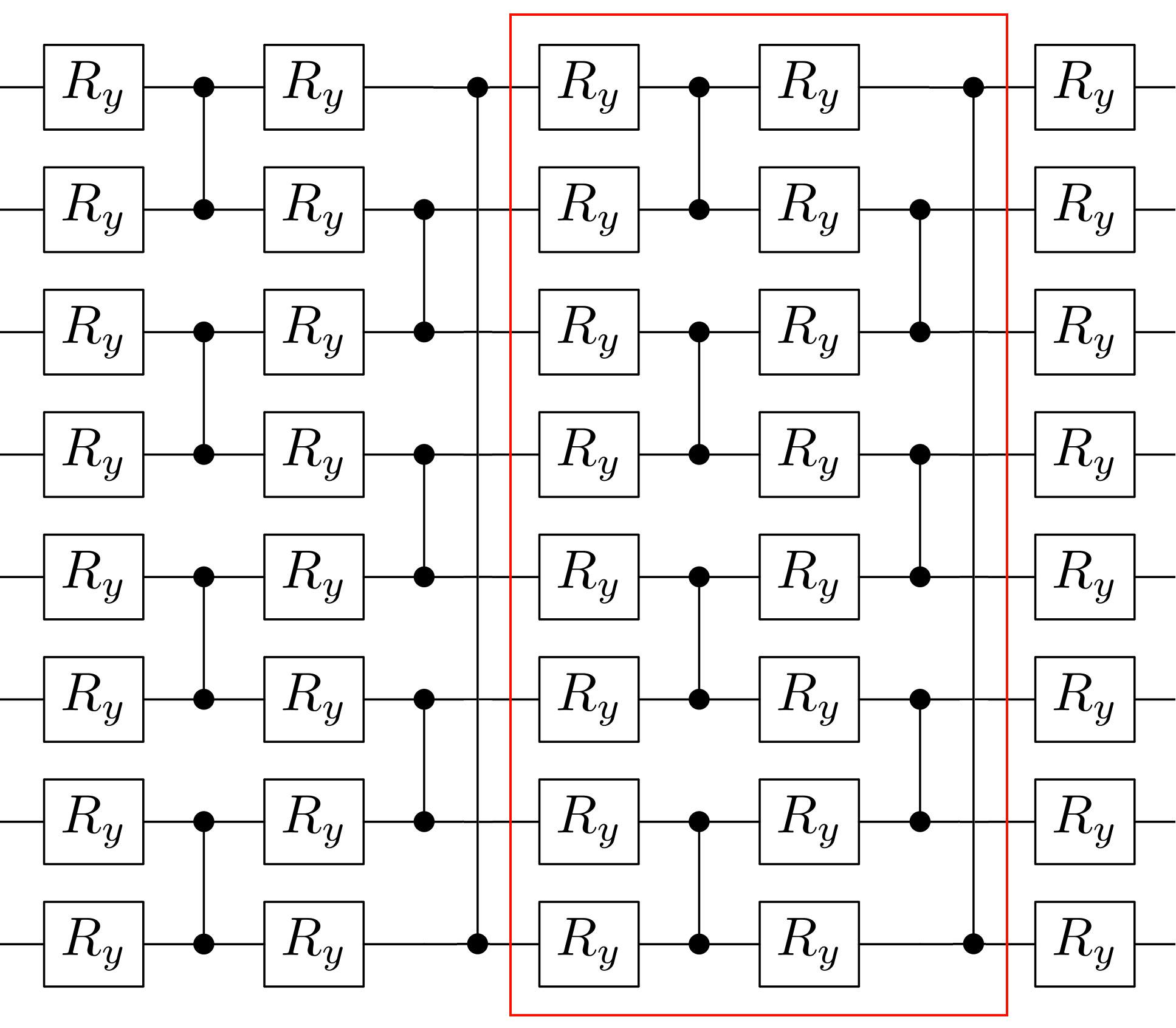}
 \caption{Variational quantum ansatz for $\tilde{U}(\vec{\theta})$ employed in our simulations. As indicated by the red box, each layer is composed of $CZ$ gates acting on alternating pairs of neighboring qubits which are preceded by $R_y(\theta_i)$ qubit rotations, $R_y (\theta_i)= e^{-i\theta_i Y/2}$. After implementing the layered ansatz, a final layer of $R_y(\theta_i)$ qubit gates is applied. Here, it is shown the case of two layers and $n=8$ qubits.} \label{fig:ansatz}
 \end{figure}

 \section{Numerical characterization}
In order to characterize the computational power of the quantum circuit presented in Fig. \ref{fig:ansatz}, as the encoder of the unitary that rotates the initial product state  $\ket{0}^{\otimes n}$ into the ground state of a given local Hamiltonian, we need to discuss its entangling power. 

In the context of many-body quantum systems, we typically characterize the goodness of a given variational ansatz in terms of how much entanglement it can support.
The maximal amount of entanglement that can be generated by our variational quantum circuit acting on a product state depends on its depth. The quantum circuit is indeed built from native unitary gates. Besides single-qubit rotations, that do not entangle different partitions, at every layer we have one $CZ$ gate per pair of spins. The $CZ$  is able to create a maximally entangled state between the pair it acts on. For example when acting on $\ket{++}$ it transforms it to $\tfrac{1}{\sqrt{2}}\left(\ket{0+}+\ket{1-}\right)$. As a result, and as expected, our unitary quantum circuit can create one bit of entanglement per pair and per layer. This fact agrees with the known fact that unitary circuits are able to generate entanglement linearly in their depth as a consequence of Lieb-Robinson bounds \cite{lieb_1972}. A circuit with the structure of the one in  Fig. \ref{fig:ansatz} acting on $n$ constituent made by  $l$ layers, indeed  could generated up to $min(n/2, l)$ entanglement between two complementary halves of the system made of spins \cite{cirac_2017}.

In the context of ground-state physics, this is a considerable entangling power.  In 1D, indeed, ground states of local gapped Hamiltonians full-fill the area-law of entanglement \cite{hastings_2007,eisert_2008,laflorencie_2016}, meaning that the entanglement of a block of spins does not grow with the size of the block but rather with the size of its boundaries. In the 1D case we are considering here, no matter how large the bipartition is, if it envolves consecuitive spins, the boundary is made by just the two sites at each  end of the block. As a result, the entanglement of a region of $n$ spins in the ground state of a 1D gapped system asymptotically saturates to a value independent of $n$. 
We thus expect that a finite number of layers should be enough to encode the ground state of gapped Hamiltonians of arbitrary number of constituents.

When the Hamiltonian is gapless, much less is known since there are theorems stating that the complexity of finding ground states of local translational invariant quantum Hamiltonians is QMA-complete \cite{aharonov_2009,osborne_2012}. A special case, however, is the one of gapless Hamiltonians whose ground state can be described by Conformal Field Theories (CFT). In that specific case, we know that the entanglement of a region of $n/2$ contiguous spins in an infinite chain scales asymptotically as 
\begin{equation}
\label{entropy2}
 S(n/2)=\frac{c}{3} \log(n/2) +d\,,
\end{equation}
where $c$ is the central charge of the corresponding CFT and $d$ is a non-universal constant \cite{holzhey_1994,callan_1994,latorre_2004,calabrese_2004}. 
In the case of conformally invariant gapless Hamiltonian,  we thus expect that the number of layers of our ansatz in Fig. \ref{fig:ansatz} needs to increase with the size of the system in order to have a uniform approximation of the system as we increase $n$, that is in order to  accommodate the logarithmic growth of the entropy. 

\subsection{The models}

In order to test these expectations, we benchmark the VQE in the case of two paradigmatic quantum spin chains, the Ising model in transverse field and the XXZ chain.
Both spin chains are exactly solvable, and we can thus characterize the error in the ground state energy we extract $E$ knowing the exact result for the ground state energy $E_0$ as $\vert E - E_0 \vert = \epsilon$. Prior work benchmarking the Quantum Approximate Optimization Algorithm \cite{farhi2014quantum} in the Ising model case can also be found in Ref. \cite{mbeng2019}.

The 1D Ising model is described by the following Hamiltonian 
 \begin{equation} \label{eq:ising}
 H_{Ising} = -\sum_j \sigma_j^z \sigma_{j+1}^z + \lambda \sum_j \sigma_j^x\,,
 \end{equation}
where $\lambda$ is the disordering field. For small $\lambda$, the system is in a ferromagnetic phase, where all the spins are aligned along the $z$ direction. As $\lambda$ increases, the system tends to disorder and goes to a paramagnetic phase for large $\lambda$. The two phases are separated by a quantum critical point, exactly at $\lambda=1$, that is in the Ising universality class. The system has indeed a $\mathbb{Z}_2$ symmetry generated by $\prod_j \sigma_j^x$ that flips all the spins. The $\mathbb{Z}_2$ symmetry breaks spontaneously at the quantum critical point.  In both phases, the elementary excitations are gapped and are spin flips in the paramagnetic phase and domain walls in the ferromagnetic phase. At the critical point, the correct variables to describe the systems are the product of spin and domain walls, giving rise to free Majorana Fermions \cite{schultz_1964}.
At the critical point, the system becomes gapless, and the low energy dispersion relation is linear, inducing an emerging Lorentz invariance. The large distance behavior of the transverse field Ising model is described by a CFT with central charge $c=1/2$, one of the well known minimal models \cite{belavin_1984,henkel_1999}.

From the point of view of entanglement, the ground states of the Ising model in both ferromagnetic and paramagnetic phases are shortly correlated and full-fill the area-law. We thus expect that they can easily be generated by a finite-depth quantum circuit, such as the one we are using here. 

At the quantum critical point, on the other hand, the ground state violates the area-law displaying logarithmic scaling of the entanglement entropy. We thus expect that the number of layers of the circuit needed to keep the accuracy constant  increases as we consider increasingly large systems. 

The XXZ model is slightly more complicated, and the Hamiltonian reads
\begin{equation} \label{eq:heis}
 H_{XXZ} =  \sum_j \left(\sigma_j^x \sigma_{j+1}^x +\sigma_j^y \sigma_{j+1}^y \right) + \Delta \sigma_j^z \sigma_{j+1}^z \,,
 \end{equation}
where $\Delta$ is the spin anisotropy. From the point of view of a Fermionic model, $\Delta$ induces a density-density interaction, and thus the model, even if still exactly solvable via the Bethe ansatz, is not anymore a model for free fermions \cite{essler_2005}. For $\Delta \gg 1$ and $\Delta \ll -1$ the system is gapped, and the spins eventually align either ferromagnetically for $\Delta <-1$ and anti-ferromagnetically for $\Delta > 1$ along the $z$-direction, indicating a Mott-insulating phase for the fermions with either unity filling or checkerboard filling. For values of $-1^{+}\le \Delta \le 1$ the system is critical, and it describes the physics of a compactified boson, where the radius of compactification depends on $\Delta$. This region is described by a CFT with $c=1$, and differently from the Ising theory, this one is interacting.  
Once more, the entropy of a region of consecutive spins in the ground state increases logarithmically with the number of spins in that region, meaning that we expect that the depth of our circuit will have to increase with the system size in order to obtain a uniform accuracy. 

\subsection{Gapped Hamiltonians, the perturbative regime}
We now start considering the performance of the VQE in the case of a gapped Hamiltonian. For every realization of the quantum circuit, we run our VQE that selects the optimal values for the free parameters in the circuit that encode the single-qubit rotations around the $Y$-axis. 
 We have $2n$ parameters per layer that are optimized  using a gradient descent method, combined with standard techniques from tensor networks. Further details can be found in \cite{AAVQE} and in Sec.~\ref{sec:methods_training} and Sec.~\ref{sec:methods_AAVQE} of the Appendix. Here it is enough to mention that optimization at the core of the VQE minimizes the energy in the space of free parameters encoding the state generated by the circuit. Parameters are iteratively changed until we reach convergence in the ground state energy. That is, after one iteration the energy decrease is smaller than a given threshold (typically of the order $10^{-12}$). 

We begin by considering the results for $\lambda=10$ in Eq. \ref{eq:ising}. We can obtain the ground state in this regime in perturbation theory. We start with the ground state of Eq. \ref{eq:ising} when $\lambda=\infty$, as the unperturbed state. The Hamiltonian simplifies to $H_0= \sum_i \sigma_i^x$, whose ground state is a product state in the $x$ basis. We then reduce $\lambda$ to a finite value, and we can express the ground state of Eq. \ref{eq:ising} for finite $\lambda$ perturbatively. The full Hamiltonian can be written as $H=H_0 +\frac{1}{\lambda} \sum_i \sigma_i^x \sigma_{i+1}^x$. Using the perturbation theory in the form of a continuous unitary transformation \cite{dusuel_2004}, we immediately realize that the $l$ order in the perturbative expansion requires $l$ layers of the quantum circuit. We thus expect that the precision of our ansatz will scale exponentially with the number of layers.
Our expectations are confirmed in the numerical results presented in Fig. \ref{fig:pert} where we see very mild size dependence but a clear exponential increase of the accuracy of the VQE with the number of layers for $\lambda=10$ and $n=8,10,12$.

\begin{figure}
 \includegraphics[width=\columnwidth]{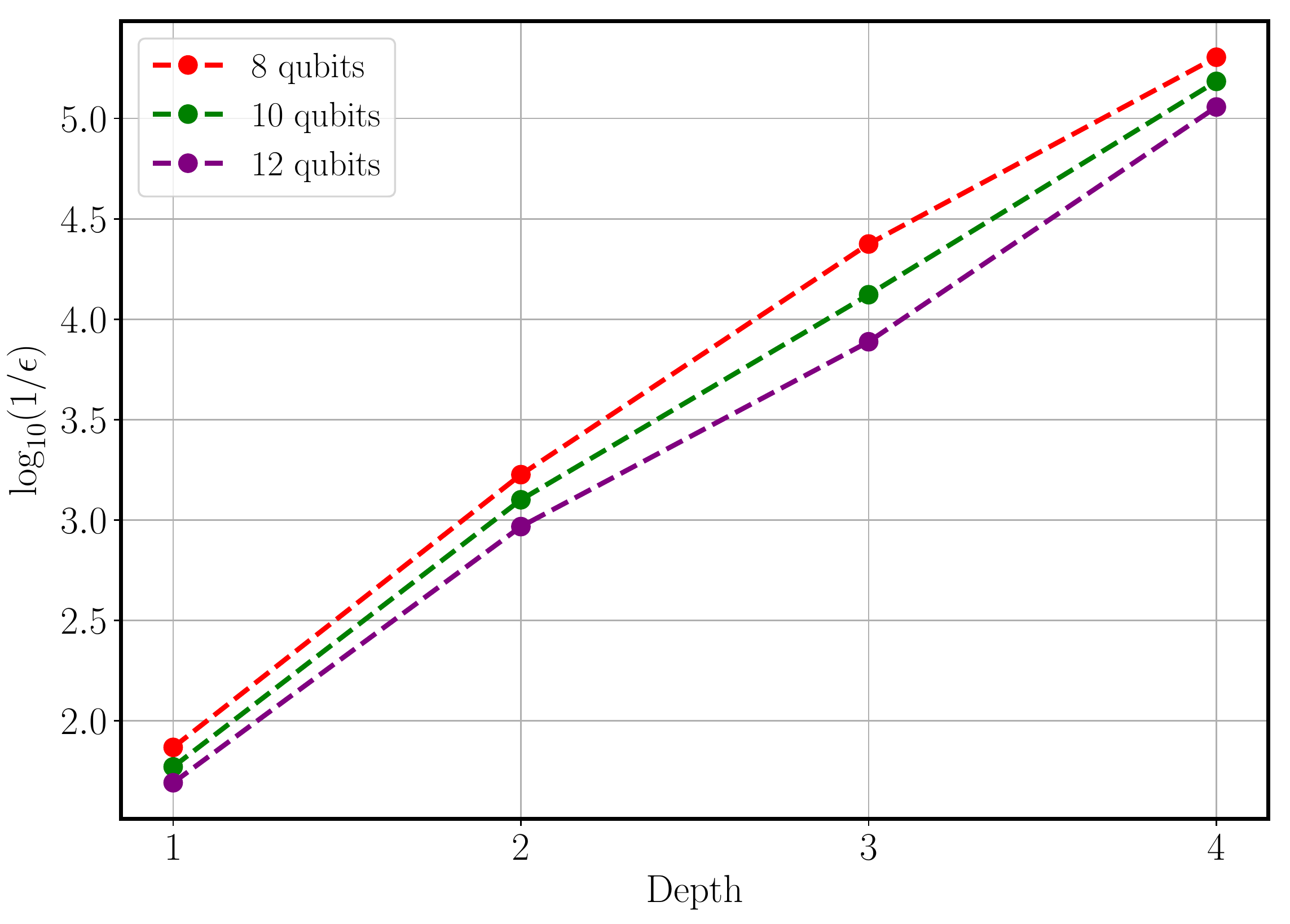}
 \caption{The error of the ground state energy in logarithmic scale as a function of the number of layers (depth) in the quantum circuit, for the optimal encoding of the ground state of the Ising model in Eq. \eqref{eq:ising} with $\lambda=10$ for different system sizes  $n=8,10, 12$. The results lie on straight lines, unveiling an exponential increase of the precision with the number of layers, as expected from a perturbative calculation. For example, with 5 layers we expect our network could include effects up to $\lambda^{-5}\simeq 10^{-5}$.
 \label{fig:pert}}
\end{figure}

 In Fig. \ref{fig:towards_crit} we repeat the same analysis as we decrease $\lambda$ towards the phase transition. For $\lambda=2$, we still appreciate an exponential scaling of the accuracy, but as expected, the slope of the semi-logarithmic plot is lower since it increases from $1/10$ to $1/2$.  For $\lambda=1$, we appreciate that the behavior of the VQE ground state energy accuracy as a function of the number of layers changes drastically from the behavior observed at larger $\lambda$. $\lambda=1$ in the thermodynamic limit is the location of the phase transition between the two gapped phases. 
 \begin{figure}
  \includegraphics[width=\columnwidth]{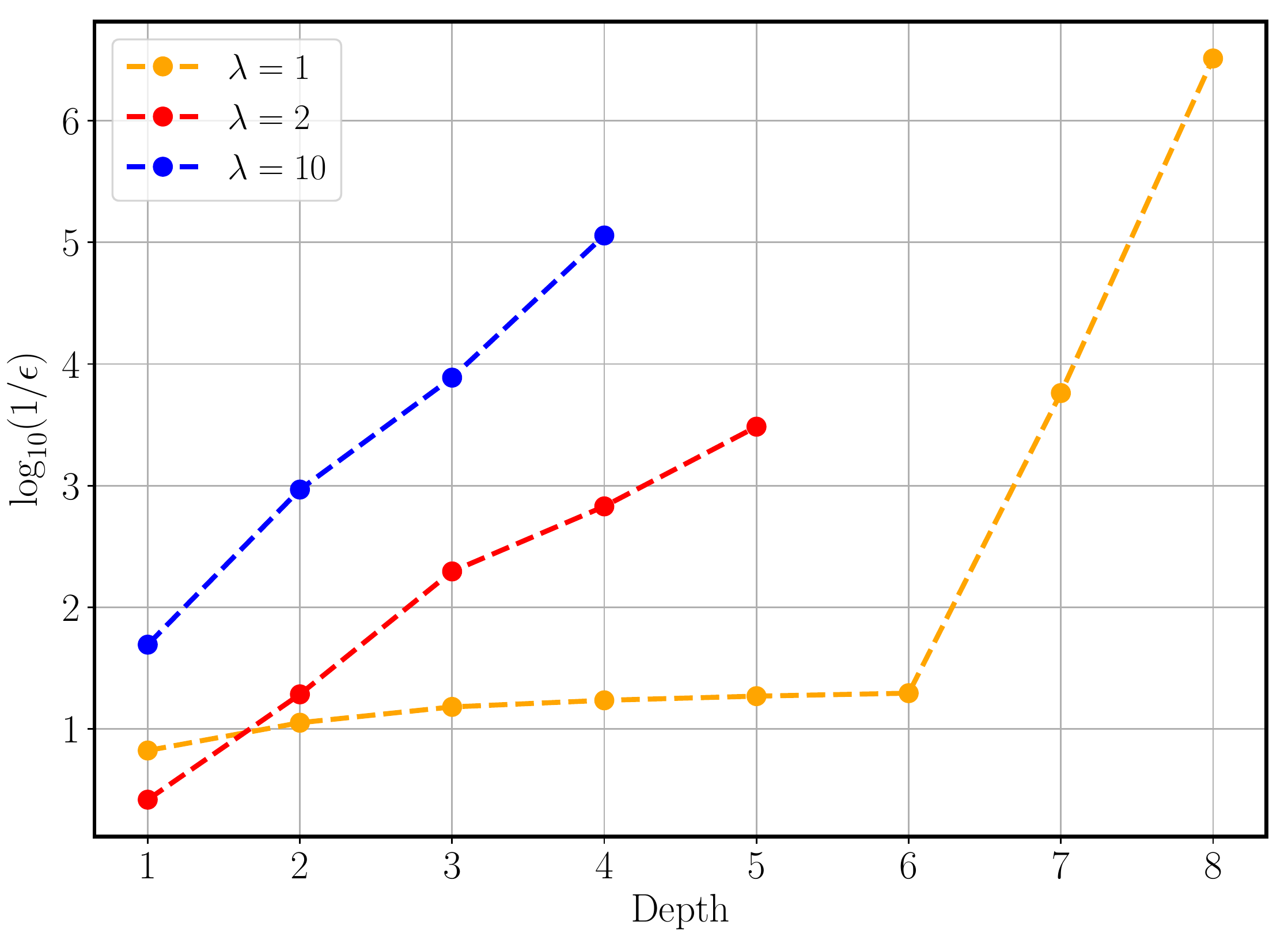}
  \caption{We benchmark our VQE on a chain of $n=12$ spins for values of $\lambda=2,10$ that are deep in the perturbative regime where the accuracy increases exponentially with the number of layers. At $\lambda=1$, the Hamiltonian is gapless in the thermodynamic limit. There the accuracy behaves differently as a function of the number of layers, unveiling two regimes.\label{fig:towards_crit}}
 \end{figure}

\subsection{Scaling of the accuracy at criticality, the two regimes}

At $\lambda=1$, the Hamiltonian of a finite length of size $n$ has a gap that closes as $1/n$ and thus becomes gapless in the thermodynamic limit. 
From the previous discussion about the entangling power of our ansatz, we thus expect that in order to obtain the same accuracy for increasingly  large systems, we will need to consider increasingly deep quantum circuits. The amount of entanglement can grow linearly with the depth of the circuit \cite{cirac_2017} and in the critical ground state we only need a logarithmic increase of the entropy. We could thus expect that a number of layers growing logarithmically with the system size could provide a uniform approximation to the ground state of increasingly large systems.

 In order to verify this expectation, we perform numerical simulations of quantum circuits of several layers (from $l=1$ to $l=11$) that are optimized to encode the ground state of systems with different sizes from $n=6$ to $n=16$  with periodic boundary conditions. We use the two Hamiltonians in Eq. \eqref{eq:ising} and \eqref{eq:heis}. In both cases, the Hamiltonians are tuned to a critical point, choosing $\lambda=1$ for the Ising model and $\Delta=1/2$ for the XXZ model. For $\Delta=1/2$,  we are far enough at the same time from the Heisenberg point (where marginally relevant operators tend to make finite-size scaling harder) and from the gapped phases. 
  
   \begin{figure*}[t!]
 \centering
 \includegraphics[scale=0.37]{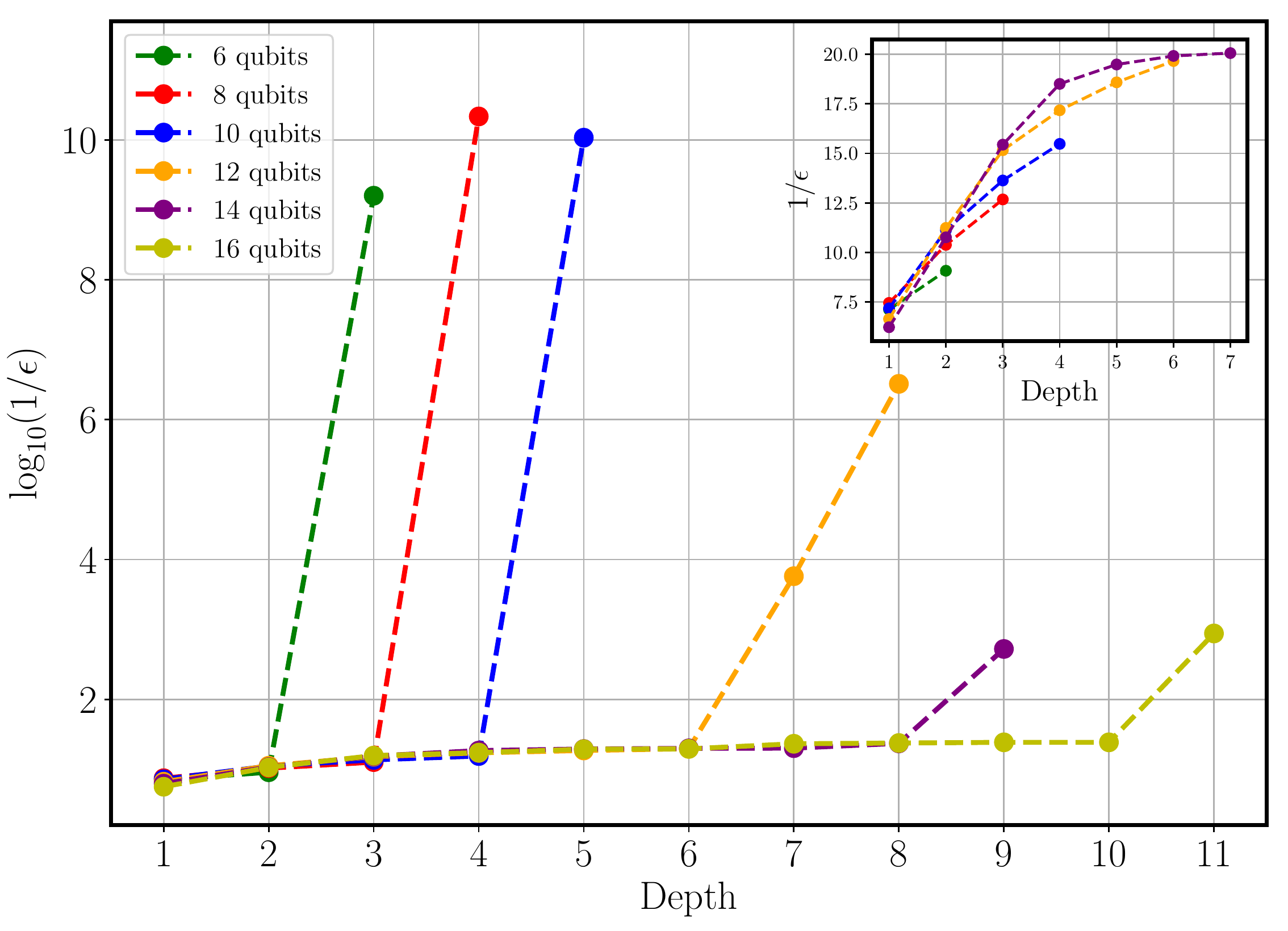}
 \includegraphics[scale=0.37]{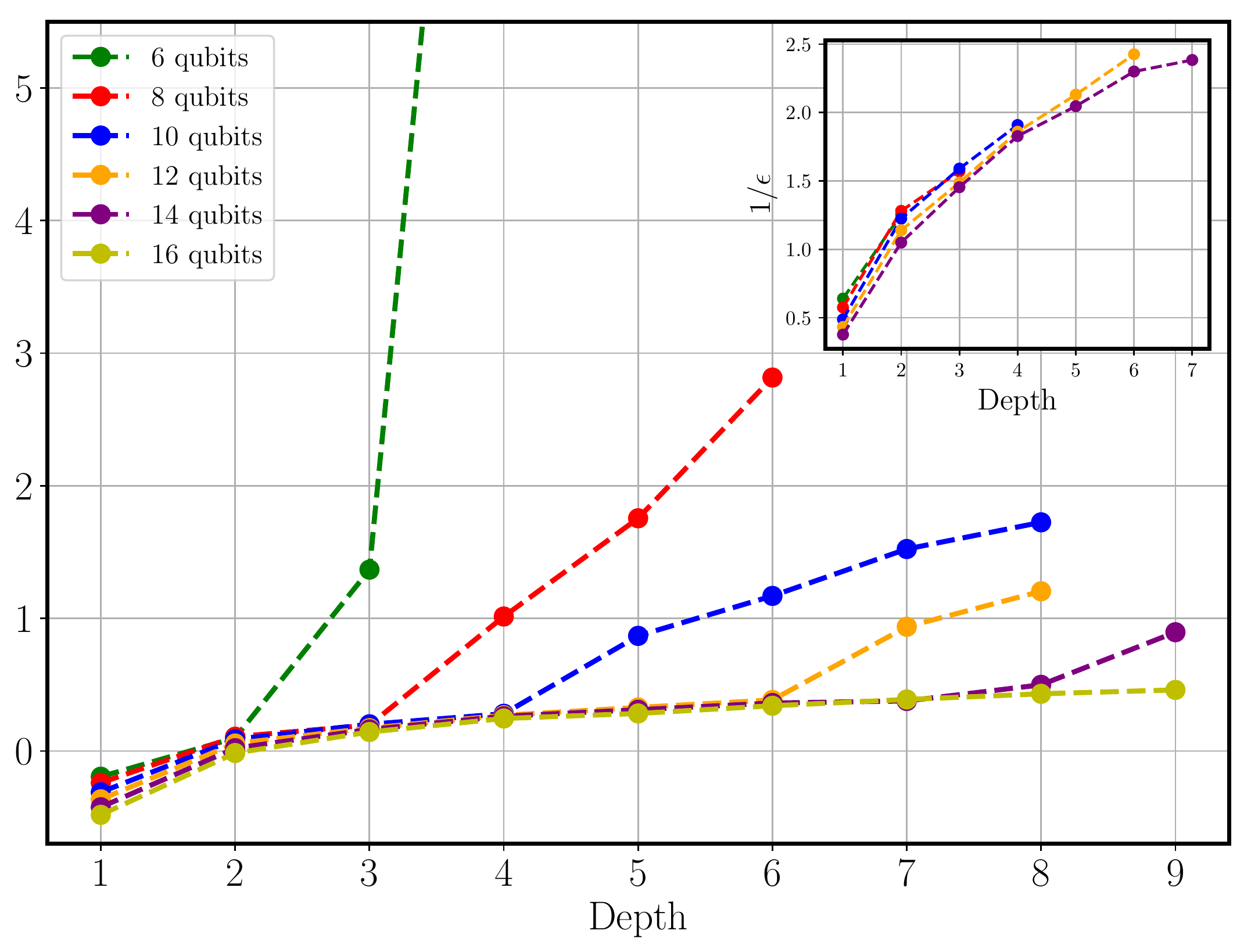}
 \caption{Error of the ground state energy in logarithmic scale {\it vs.} number of layers in the variational ansatz, for the Ising model (left) and XXZ model (right). Different colors encode systems made by a different number of qubits $n$. Better results are encoded by points on the top of the plot, where  $\log_{10} (1/\epsilon)$ is large, and hence $ \epsilon$ is small. As we increase the depth of the circuit, the error initially improves very slowly, as shown by the almost horizontal behavior of the curves. It then suddenly starts to increase several orders of magnitude. This very sharp change of behavior identifies two regimes, namely, \emph{finite-depth} regime, where the energy accuracy  does not depend on the size of the system but only on the number of layers and increases slower than exponentially with it, and the \emph{finite-size} regime where the energy accuracy increases exponentially. The insets show a power law increase of the accuracy in the \emph{finite-depth} regime. \label{fig:energies}}
 \end{figure*} 

 In Fig.  \ref{fig:energies} we plot the logarithm of the inverse error $\epsilon = \vert E - E_0 \vert$ versus the depth of the circuit, for the Ising model (left) and XXZ model (right). In these plots, the best approximations are points on the far top side of the plot. 
 The accuracy clearly shows two different regimes. Initially, the accuracy varies very little as we increase the number of layers, and hence the number of variational parameters and the entangling power of the circuit. The error indeed stays of the order $10^{-2}$ from one to several layers for the Ising model and of the order of $10^{-1}$ for the XXZ model. This behavior seems to be completely independent of the system size since curves obtained by optimizing the energy almost coincide. 
 
 In the inset of the two panel of  Fig.  \ref{fig:energies} we zoom-in in this first regime and plot the same results on a linear scale, that is we plot $1/\epsilon$ versus $l$. There we can appreciate now that the improvement in accuracy in this regime is a power law of the depth of the circuit, rather than exponential. 
 
 We thus seem to observe a \emph{finite-depth} regime, where the precision of the variational scheme depends very little on the number of layers and improves very slowly. This regime changes drastically at a critical number of layers $l=l^*(n)$ that strongly depends on the size of the system. At that critical number of layers, the precision improves several orders of magnitude abruptly. This improvement is particularly abrupt in the case of the Ising model, whereby just adding one layer, the accuracy can improve several orders of magnitude.  For the XXZ model, we see similar features though the overall accuracy is lower as a consequence of the higher amount of entanglement in the ground state.
 
 It is interesting to notice that in the \emph{finite-depth} regime, the accuracy in the energy does not depend on the size of the system, differently from what we would expect for finite-size systems, where the energy should approach the thermodynamic limit from below with a correction proportional to  $\epsilon\propto 1/n^2$ \cite{affleck_1986,cardy_1986} for systems with periodic boundary conditions as are the ones we consider here.  
 As the number of layers becomes larger than a critical value of $l^*(n)$ (that once more strongly depends on the system size $n$), the precision starts to improve exponentially fast with the number of layers. This is consistent with the appearance of a finite correlation length of order $n$, which is ultimately responsible for the exponential scaling of the energy. 
 
 In the XXZ model, the improvement of the energy accuracy when transitioning from the \emph{finite-depth} to the \emph{finite-size} regime is not as sharp as for the Ising model. However the two regimes are still clearly visible. The first \emph{finite-depth} regime, where the improvement is slow and does not depend on the size of the system but instead on the number of layers, and a \emph{finite-size} regime where the improvement is exponential, and where the slope is different for different system size, revealing the presence of a correlation length that is proportional to the system size. 
 The \emph{finite-size} dominated regime can also be interpreted as a \emph{refinement regime}, since there, with the help of a few additional layers, we typically obtain improvements on the ground state energy of several orders of magnitude.
 
 These two regimes seem to be reminiscent of the finite-entanglement and finite-size regime observed in Matrix Product State (MPS) simulations of the critical systems \cite{tagliacozzo_2008,pollmann_2009,pirvu_2012,stojevic_2015}. Thus,  
 in order to get a better quantitative characterization of the two regimes, we go back to studying the entanglement entropy of half of the system of the wave-functions obtained as a result of the VQE.

 \subsection{Scaling of the entanglement entropy at criticality}
 
  \begin{figure*}[htb!]
 \centering
 \includegraphics[scale=0.32]{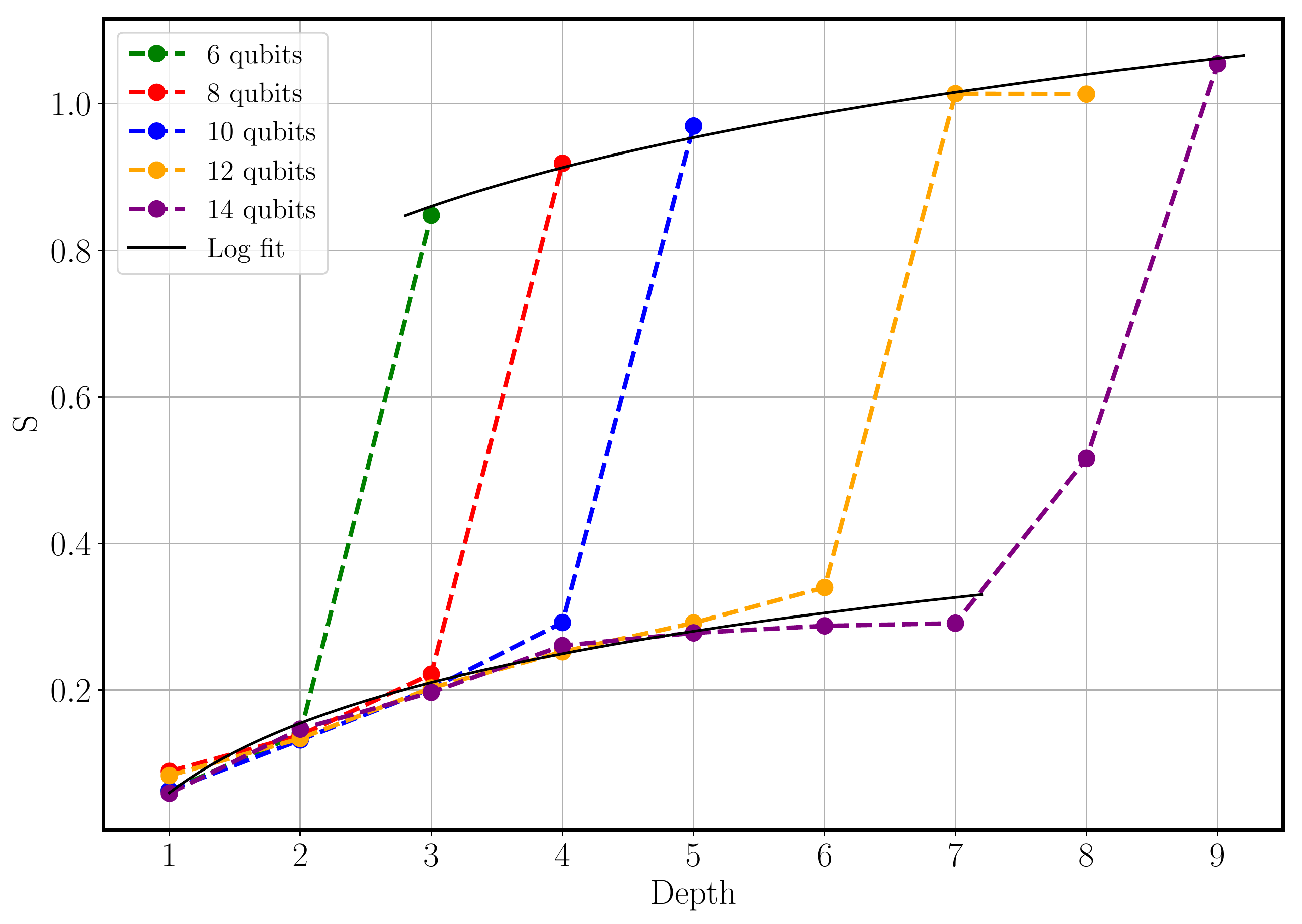}
 \includegraphics[scale=0.32]{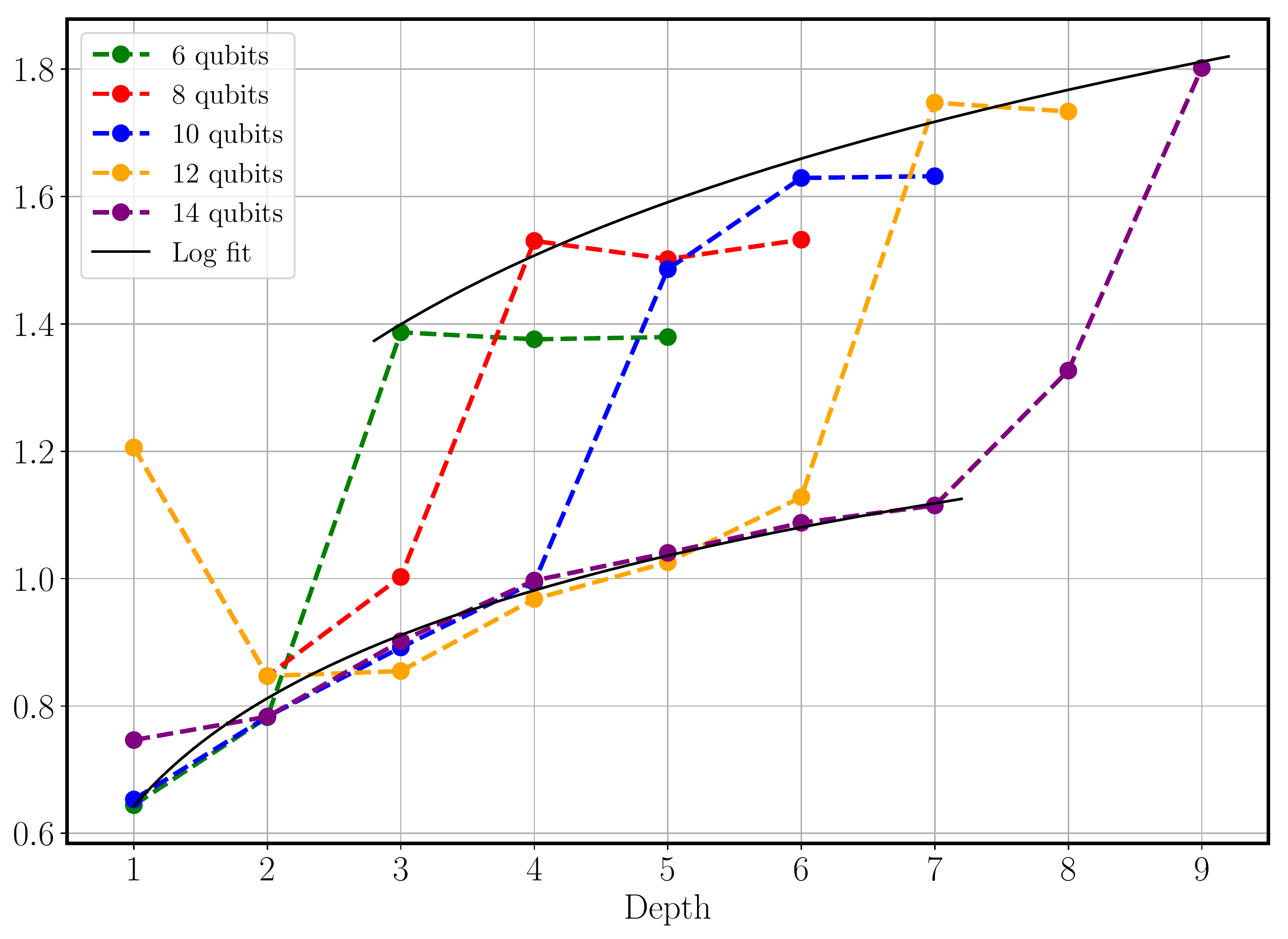}
 \caption{Von Neumann entropy of the bipartition {\it vs.} the number of layers in the variational ansatz, for the Ising model (left) and XXZ model (right), and for increasing number of qubit $n$. Black lines represent logarithmic fits of the data. Once more, the sudden growth of the entropy coincides with the change of regime.}
 \label{fig:entropy}
 \end{figure*}
In order to compute the entanglement entropy of the states we obtain from our VQE, we partition the system in two halves each made by $n/2$ contiguous spins. Calling $A$ one of the two halves, we construct the reduced
density matrix of $A$ as $\rho_A=\textrm{tr}_B  \ket{\psi_{\textrm{opt}}}\bra{\psi_{\textrm{opt}}}$. The Von Neumann entropy of the eigenvalues of $\rho_A$ encodes the entanglement entropy, $S_A=-tr \rho_A\log(\rho_A)$. 

Our results for $S_A$ are reported in Fig. \ref{fig:entropy}. On the left panel, we represent the entropy computed for a bipartition in two halves of the ground state of the Ising model at the critical point. This is obtained by fixing $\lambda=1$ in  the Hamiltonian \eqref{eq:ising}. We compute the half chain entropy for increasingly large systems from $n=6$ to $n=14$ qubits. The entropy shows two regimes. In the first regime, the entropy increases as the number of layers increases, and the increase is compatible with being logarithmic in the number of layers, being definitely slower than the linear increase with the number of layers that the circuit could support.

At a critical value of $l$,  $l^*(n)$ that coincides with the critical value observed in the scaling of the energy error, the entropy jumps and saturates to a value that depends on the system size. For values of $l$ larger than $l^*(n)$ the entropy is roughly constant. We can thus fit the entropy as a function of $l^*(n)$, and obtain a good agreement with a scaling of the type $S_A=\alpha\log(l^*(n))+ \beta$. The value of $\alpha$ extracted from a numerical fit to the data is  $\alpha_{Ising} = 0.18(2)$. This should be compared with the value of pre-factor that rules the scaling of the entanglement entropy with the size of the system at criticality that only depends on the central charge and is $1/6$. The fact that $\alpha$ is very close to $1/6$ suggests that $l^*(n)$ scales as $l^*(n) = \gamma n$ where $\gamma$ is a constant.

In the \emph{finite-depth} regime, the entropy of a bipartition is ruled by the number of layers of the VQE rather than by the size of the bipartition. We observe a logarithmic increase of the entropy with the number of layers. The entropies of sub-regions with a very different number of constituents are very similar. A fit of the data using a logarithmic increase of the entropy as a function of the number of layers is plotted in black, both for the Ising and for the XXZ model, in Fig. \ref{fig:entropy}. Even if the corresponding curve significantly deviates from the numerical value for large depths, it correctly reproduces the average values of the entropies for different systems sizes obtained with VQE having the same number of layers  $l$ in the regime where   $l \ll l^*(n)$.  
It is important to notice that obtaining accurate values for the entropy in the \emph{finite-depth} regime is very challenging. We are indeed optimizing the energy in a manifold of excited states, where the energy is still considerably higher than the energy of the ground state. As a result, there are many states with roughly the same energies but very different entropies, as observed in the context of MPS simulations \cite{tagliacozzo_2008}. The fit to the entropy as a function of the logarithm of the depth of the circuit, in the \emph{finite-depth} regime, that is for $l\ll l^*(n)$ provides a value for $\alpha=0.13(4)$. The number is compatible  with the expected scaling of the entropy of the system that deviates from the CFT due to the presence of a finite correlation length 
\begin{equation}\xi(l)\propto l \,.
\label{eq:main_result}
\end{equation}

The right panel of  Fig. \ref{fig:entropy} presents a similar study of entanglement entropy in the ground state of the XXZ model described by the Hamiltonian \eqref{eq:heis}. The behavior is similar to the one observed in the case of the Ising model at the critical point, although we appreciate a much larger entropy as expected from the fact that the central charge of this model is twice the one for the Ising model, i.e. $c=1$. Once more, we observe two regimes, one regime where the entropy is roughly independent on the size of the partitions but depends strongly on $l$ and seems to follow a logarithmic increase. At the values of $l=l^*(n)$ already identified in Fig. \ref{fig:energies}, we see that the entropies jump to values that depend on the size of the block. For larger values of $l$, the entropy remains almost constant. We identify  $l^*(n)$ with the last point of each numerical series presented in  Fig. \ref{fig:entropy},  and we can fit a logarithmic growth of the entropy at that specific value of $l^*(n)$, obtaining for the coefficient of the logarithmic scaling $\alpha_{XXZ} = 0.37(9)$. This value is compatible with $c/3=1/3$. Comparing this result with the expected scaling for the entropy of a bipartition made by $n/2$ spin in Eq. \ref{entropy2}, we have a further indication  that $l^*(n)=\gamma n$ with $\gamma$ constant. Further confirmation of this identification can be obtained by fitting the lower part of the numerical sequences for the entropies as a function of $\log(l)$. In these regions, for $l \ll l^*(n)$,  the entanglement entropy depends only mildly on $n/2$, the size of the partitions. The result of the best fit tells us once more that $S_A=\alpha\log(l)+ \beta$ with $\alpha_{XXZ} = 0.24(16)$. Once more this result is compatible with the $1/3$ expected for a system having effective length $l$ rather than $n/2$ thus providing a further confirmation of Eq. \eqref{eq:main_result}.

 \section{Interpretation of the results and outlook}
 
In the previous sections, we have unveiled that a VQE that uses the structure of the quantum circuit presented in Fig. \ref{fig:ansatz} is able to accurately describe the ground state of local Hamiltonians both in gapped regimes and in those gapless regimes that can be described by a CFT.
 This, by itself, is an important observation given the current availability of NISQ devices in the labs.

As we have discussed, the circuit structure in Fig. \ref{fig:ansatz} is inspired by the idea of quasi-adiabatic continuation \cite{hastings_2005}, a set of analytical results that tell us that whenever two states belong to the same phase, we can transform one into the other by evolving it using a local gapped Hamiltonian for a finite amount of time. The resulting finite time evolution, at least in a Trotter approximation, takes the form of the tensor network in Fig. \ref{fig:ansatz}. Analogously when two states are in the same phase, we can build one from the other by applying to it a perturbation. The corresponding perturbative expansion can also be casted as a continuous unitary transformation \cite{wegner_1994,glazek_1993,glazek_1994,dusuel_2004} that can be discretized and expressed as the circuit in Fig. \ref{fig:ansatz} \cite{vanderstraeten_2017}. From this perspective, the results we have presented concerning the performances of the VQE in the gapped regime are not surprising. However, they confirm that whenever we can rotate the ground state of a gapped Hamiltonian into a product state, our circuit, made of simple elementary gates,  does it optimally, with a precision that improves exponentially with its depth, as expected from perturbation theory.  

The results in the critical regime are much more interesting. First of all, by definition, the critical point is not connected to a product state via a perturbative expansion. However, there is no true critical point in a finite-size system, thus the fact that we can encode faithfully pseudo-critical finite systems has to be expected. 
Our results go beyond this expectation and allow us to identify the minimum depth of the circuit that is required in order to represent the pseudo-critical grounds state of a finite system faithfully. 

 Our numerical results for critical systems point to the existence of two different regimes. A regime that we have called \emph{finite-depth} in which $l < n/2$ where the precision of the results only depends on the number of layers and a \emph{refinement}, or \emph{finite-size} regime. In the \emph{finite-depth} regime, the accuracy of the ground state energy increases very slowly with the number of layers, only polynomially.  The entanglement entropy of a region that, in CFT should increase logarithmically with the number of spins in that region,  only increases logarithmically with the number of layers of the circuit. Two half system bipartitions taken from systems with different size (and hence containing a different number of spins) have roughly the same entropy when they are computed from a quantum circuit with depth smaller than $l\ll l^*(n)$.

In the \emph{refinement} regime, the results acquire the expected finite-size dependence. In that regime, the precision increases exponentially with the number of layers, as seen from straight lines in  Fig. \ref{fig:energies}. The slope of the straight lines allows us to define a correlation length  $\xi$ as $\epsilon \propto \exp(-l/\xi)$  that as clear from the plots depends on $n$, as $\xi\propto n$. 
As expected in the \emph{finite-size} regime, the entanglement of a region made by $n/2$ spins increases logarithmically with $n/2$.  
 
 The logarithmic increase of the entropy in the \emph{finite-depth} regime with a pre-factor that is numerically compatible with the ones dictated by the CFT and the location of the jump between the two regimes at a value $l^*(n)\propto n$ are compatible with the following analysis. The finite-depth of the system induces an effective correlation length $\xi_l \propto l$ as described by our main result in Eq. \eqref{eq:main_result}. Since the finite size of the system also induces a finite correlation length $\xi_n \propto n$ we can expect a cross-over phenomenon when $l\simeq n$, where the system transition from a regime in which the shortest length is the one induced by $l$ to a regime where finite-size effects become dominant since the shortest correlation length is the one induced by the size of the system. This simple explanation is compatible with what we observe numerically. 
 
 \subsection{Finite correlation length from Lieb-Robinson bounds}
 Possibly the most interesting observation is the fact that in the \emph{finite-depth} regime, the correlation length is proportional to $l$ as encoded by Eq. \eqref{eq:main_result}. This is a direct consequence of Lieb-Robison bounds \cite{lieb_1972,bravyi_2006}. Indeed if we think of $l$ as the computation time, we immediately understand that, as a consequence of the existence of a finite speed of propagation,  in order to build up correlations at a distance $l$ we need to wait for times proportional to  $l$. 

 This observation creates an apparent tension with what we would expect by describing the system in the language of MPS and finite entanglement scaling \cite{nishino_1996,tagliacozzo_2008,pollmann_2009,pirvu_2012,stojevic_2015}. 
 In that context, the best approximation to a critical ground state with an MPS with bond dimension $\chi$, is a finitely correlated state with correlation length proportional to $\xi=\chi^{\kappa}$ for some $\kappa$.
 Once we rewrite the quantum circuit in terms of a MPS, we notice that the MPS bond dimension $\chi$ increases exponentially with the number of layers as $\chi=2^l$. This fact can be easily observed by noticing that the $CZ$ gates can be expressed as a Matrix Product Operator with bond dimension $2$. 
  
 We can thus blindly try to make contact with what it is known for representing critical ground states with MPS \cite{verstraete_2006a,tagliacozzo_2008}. Indeed by just equating the effective correlation length induced by the finite bond dimension with the one coming from the finite size of the system, we should be able to identify the number of layers $l^*(n)$ necessary to transition from finite-depth to finite-size. In formulas $2^{l^*(n)\kappa}\propto n$ meaning that $l^*(n)\propto \log (n)/ \kappa$. This prediction is clearly different from what we observe in our data described by Eq. \eqref{eq:main_result}, describing a linear dependence of the critical $l^*(n)$ with $n$. 
 
This apparent discrepancy can be understood in several ways. First, the MPS constructed from Fig. \ref{fig:ansatz} are not generic, but very constrained as we have discussed when we have identified that these states only depend on a number of parameters that increase only linearly with the number of layers. The generic MPS we would map the quantum circuit to has a number of independent parameters that increases as the square of their matrices bond dimension. The naive mapping suggest that  the bond dimension of the MPS encoding the VQE increases exponentially with the number of layers of the circuit. However the MPS is strongly constrained, meaning  that the quantum circuits describe only  exponentially small corners of the possible MPS states with the same bond dimension.  Furthermore, a pseudo-critical system of size $n$ is a system with  correlations that spread everywhere up to distance $n$ \cite{cardy_1986}. As a result, starting from a product state and using unitary transformations that have a finite Lieb-Robinson speed (since they are built out of local gates),  we cannot build such correlations in times shorter than times proportional to  $n$. 

Even though the entangling power of the class of  quantum circuits we use in the VQE is in principle maximal as for any unitary quantum circuit \cite{cirac_2017}, when we optimize it in order to describe a pseudo-critical ground state starting from a product state, the bottleneck of our construction is not the entangling power of the circuit, but rather  the finite speed of propagation of correlations. From the practical point of view, this fact provides hints on how to improve on this linear increase of depth. On the other hand from the theoretical point of view, we have managed to identify a sub-manifold of MPS that can be used to represent critical states, but whose entanglement does not grow as the $\log(\chi)$ but rather as $\log(\log(\chi))$. 

\subsection{Scaling with the number of free parameters in the wave-function ansatz}
This observation is of fundamental importance. We need indeed to notice that in a generic MPS, the number of free parameters scales as the power law of the bond dimension, roughly as $\chi^2$. In our quantum circuit, it scales linearly with the number of layers. The contradiction that we have encountered in the previous section was based on assuming that the entropy in a critical MPS should scale as the logarithm of the bond dimension, that is, a critical MPS would saturate the maximal entropy bound. Here we observe that the contradiction is removed if we trade the bond dimension with an effective number of free parameters. If we call that number ${\mathcal{N}}$,  ${\mathcal{N}}$ is proportional to $\chi^2$ for generic MPS and to $l$ for the quantum circuits we are describing here. We can thus rewrite both the scaling of the entropy observed in the generic critical MPS wave-functions and the one observed for these specific finite-depth quantum circuits as 
\begin{equation}
 S \propto \log(\mathcal{N})\,.
\end{equation}
This equation, can be used in the other direction, as a way to define the effective number of free parameter of a given variational ansatz as $\mathcal{N} \propto \exp S$. 
This identification would allow to compare different ansatz in a natural way.  Similar ideas have been pushed forward for example, in the context of comparing the performances of MPS and MERA in studying a critical system in Ref. \cite{evenbly_2011}. There the authors have tried to compare the performances of a MERA and MPS in terms of the number of free parameters, without however comparing the actual entropy computed from the two set of ansätze. 
In the same way, when trying to compare Neural Networks wave-functions with the one of  MPS and Tree Tensor Networks as done e.g. in Ref. \cite{collura_2019} one could try to use  the $\mathcal{N}$ extracted for the two ansätze to perform the comparison. 

\section{Conclusions}

In this paper, we have analyzed the performance of a finite-depth quantum circuit in order to encode the ground state of local Hamiltonians. We have shown that as expected, the precision of the results improves exponentially with the depth of the circuit in gapped phases. In conformally invariant gapless phases, the precision improves very slowly up to a number of layers that increase linearly with the system size $l^*(n)$. We identify that regime with a \emph{finite-depth} regime, where the depth of the circuit dictates the appearance of an effective correlation length. Beyond that number of layers, the precision improves again exponentially, and the VQE provides a faithful representation of pseudo-critical ground states. 

We have provided an explanation of this phenomenon in terms of Lieb-Robinson bounds and the finite speed of propagation of correlation in systems described by local Hamiltonians. We have also discussed the implications of our observations in the context of comparing the power of different variational ansatz in representing the ground state of critical systems. We believe that, in the context of critical systems, the actual entanglement entropy of the wave-function provides a proper measurement of the effective number of free parameters in the ansatz $\mathcal{N}$, the number that is ultimately responsible of the accuracy of the results. We have discussed how in generic MPS systems that number scales as $\mathcal{N}\propto \chi^{2}$,  in contrast with the present case $\mathcal{N}\propto l$. 
It would be interesting to extend our comparison to more complex ansätze and circuit structures, and identify the $\mathcal{N}$ for those architectures that could be simulated with current NISQ computers. 

\section*{Acknowledgements} 

CBP and JLZ acknowledge Artur García-Saez for useful discussions. CBP wishes to thank Will Zeng for the support given through Unitary Fund. CBP and JIL acknowledge CaixaBank for its support of this work through Barcelona Supercomputing Center's project CaixaBank Computación Cuántica. LT is supported by the Ram\'on y Cajal program
RYC-2016-20594. CBP, JLZ, LT, and JIL are supported by Project PGC2018-095862-B-C22  (MCIU/AEI/FEDER,UE) and Quantum CAT (001-P-001644).

\bibliographystyle{apsrev4-2}
\bibliography{scal_circ,scal_circ_lt}

\appendix

\section{Solovay-Kitaev theorem \\ Energy and entropy approximation}
 One of the significant challenges of quantum computation is to implement quantum algorithms efficiently. Mathematically, the complexity of an algorithm can be defined in the context of differential geometry relating distances in the manifold of unitary operations with the circuit complexity \cite{Ni, DoNi}. In quantum computing, we are limited to a specific set of quantum gates to perform arbitrary unitary operations. Thus, we have to find optimal combinations of these gates that approximate the desired operation. Approximating a unitary operation $U$, given a set of gates $\mathcal{G}$, means that we have to find $g_1\cdots g_l\in\mathcal{G}$ such that $\Vert U - g_1\cdots g_l \Vert $ is sufficiently small ($\Vert \cdot \Vert$ denotes a distance in the manifold of unitary operations such as the operator norm or the trace norm).
 
 In the following, we are going to relate the error $\varepsilon$ between an approximated quantum circuit $\tilde{U}$ and the exact one $U$ that produces a ground state $\ket{\psi_0}$ of some Hamiltonian $H$, with the error of the ground state energy. The error of the ground state energy may be defined as $\vert \tilde{E_0} - E_0 \vert$, where $E_0$ is the lowest eigenvalue of $H$, and $\tilde{E_0}$ is the expectation value $\bra{\tilde{\psi_0}} H \ket{\tilde{\psi_0}}$, being $\ket{\tilde{\psi_0}}$ the approximated state given by $\tilde{U}$. We may assume as well that exists an ideal circuit $U$ that maps our initial state to the exact ground state of a given Hamiltonian, although implementing $U$ may not be efficient in terms of the number of qubits $n$.
 
 \begin{lemma}\label{energy_error}
 Given a universal set of quantum gates $\mathcal{G}$ closed under inversion, a Hamiltonian $H$, and error $\varepsilon >0 $ it is possible to find a quantum circuit $\tilde{U}$ such that it can simulate an approximation  for the ground state $\ket{\tilde{\psi_0}}$, with an error of the energy of $\mathcal{O}(\varepsilon^2)$  in a gate complexity of
 \begin{equation}\label{gate_error}
 \mathcal{O}(\log^c(1/\varepsilon ))\,,
 \end{equation}
 for some constant c, $c \leq 4$.
 \end{lemma}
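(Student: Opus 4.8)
The plan is to split the statement into two essentially independent pieces: the gate-count bound, which follows directly from the Solovay--Kitaev theorem already quoted in the introduction, and the quadratic $\mathcal{O}(\varepsilon^2)$ energy bound, which is where the actual content lies and which exploits the fact that $\ket{\psi_0}$ is not an arbitrary target but the \emph{ground state}. The first piece is a citation; the second is a short variational argument.

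First I would fix the idealized (possibly inefficient) circuit $U$ with $U\ket{0}^{\otimes n}=\ket{\psi_0}$ and invoke Solovay--Kitaev to produce a circuit $\tilde{U}$ built from gates in $\mathcal{G}$ with operator-norm error $\Vert U-\tilde{U}\Vert\le\varepsilon$ using $\mathcal{O}(\log^c(1/\varepsilon))$ gates, with $c\le 4$ (in fact $c\sim 3.97$). Since $\tilde{U}$ is unitary, the state $\ket{\tilde{\psi_0}}=\tilde{U}\ket{0}^{\otimes n}$ is automatically normalized, and the operator-norm bound transfers directly to the states, $\Vert\ket{\tilde{\psi_0}}-\ket{\psi_0}\Vert=\Vert(\tilde{U}-U)\ket{0}^{\otimes n}\Vert\le\varepsilon$. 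This already establishes the gate-complexity claim~\eqref{gate_error}; the remaining task is to upgrade this first-order closeness of the states into a second-order closeness of the energies.

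The key step is the variational estimate. Writing $\ket{\tilde{\psi_0}}=c_0\ket{\psi_0}+\ket{\perp}$ with $c_0=\langle\psi_0|\tilde{\psi_0}\rangle$ and $\ket{\perp}$ orthogonal to $\ket{\psi_0}$, the state bound gives $\Vert\ket{\tilde{\psi_0}}-\ket{\psi_0}\Vert^2=2-2\,\mathrm{Re}(c_0)\le\varepsilon^2$, so $\mathrm{Re}(c_0)\ge 1-\varepsilon^2/2$ and therefore $\Vert\ket{\perp}\Vert^2=1-|c_0|^2\le\varepsilon^2$. Because $\ket{\psi_0}$ is the ground state we have $(H-E_0)\ket{\psi_0}=0$ and $H-E_0\ge 0$, so the linear contribution cancels when we write $\tilde{E_0}-E_0=\bra{\tilde{\psi_0}}(H-E_0)\ket{\tilde{\psi_0}}$, leaving only the orthogonal piece:
\begin{equation}
\tilde{E_0}-E_0=\bra{\perp}(H-E_0)\ket{\perp}\le(E_{\max}-E_0)\,\Vert\ket{\perp}\Vert^2\le(E_{\max}-E_0)\,\varepsilon^2\,,
\end{equation}
which is the claimed $\mathcal{O}(\varepsilon^2)$ bound.

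The main obstacle is conceptual rather than computational: it is recognizing that the quadratic gain is a genuine consequence of the stationarity of the Rayleigh quotient at the ground state. For a generic target state the first variation would survive and one would obtain only $\mathcal{O}(\varepsilon)$; it is precisely the vanishing first-order term $(H-E_0)\ket{\psi_0}=0$ together with the positivity of $H-E_0$ that promotes this to $\mathcal{O}(\varepsilon^2)$. I would also flag, for honesty, that the hidden prefactor is the spectral width $E_{\max}-E_0$ above the ground state, which for a local Hamiltonian on $n$ qubits is extensive; the bound reads $\mathcal{O}(\varepsilon^2)$ for a \emph{fixed} $H$ but carries an $\mathcal{O}(n)$ factor if one demands uniformity in the system size.
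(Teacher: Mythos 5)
Your proof is correct, but it reaches the $\mathcal{O}(\varepsilon^2)$ energy bound by a genuinely different route than the paper. The paper's argument is perturbative on the unitary: it writes $\tilde{U}=e^{-i\varepsilon A}U$ for a bounded Hermitian $A$, expands the state to first order as $\ket{\tilde{\psi_0}}=\ket{\psi_0}-i\varepsilon A\ket{\psi_0}+\mathcal{O}(\varepsilon^2)$, and finds $\tilde{E_0}=E_0+\varepsilon^2\bra{\psi_0}AHA\ket{\psi_0}$, attributing the cancellation of the linear term to the hermiticity of $A$ and the sign flip of $i$ under conjugation. (Strictly, that linear term is $i\varepsilon\bra{\psi_0}[A,H]\ket{\psi_0}$, which vanishes because $\ket{\psi_0}$ is an eigenstate of $H$, so $\bra{\psi_0}AH\ket{\psi_0}=E_0\bra{\psi_0}A\ket{\psi_0}$ is real --- the same stationarity you isolate.) Your variational argument --- decomposing $\ket{\tilde{\psi_0}}=c_0\ket{\psi_0}+\ket{\perp}$, bounding $\Vert\ket{\perp}\Vert^2\le\varepsilon^2$, and using $(H-E_0)\ket{\psi_0}=0$ together with $H-E_0\ge 0$ --- is exact rather than a truncated expansion, yields the one-sided estimate $0\le\tilde{E_0}-E_0\le(E_{\max}-E_0)\,\varepsilon^2$ with an explicit constant, and makes transparent \emph{why} the ground-state property is essential; your remark that the prefactor $E_{\max}-E_0$ is extensive for a local Hamiltonian is a fair caveat the paper does not make. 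What the paper's expansion buys in exchange is reusability: the same parametrization $\tilde{U}=e^{-i\varepsilon A}U$ is carried over directly to the companion lemma on the Von Neumann entropy, where the first-order term in the reduced density matrix does \emph{not} cancel and one only gets $\mathcal{O}(\varepsilon)$; your spectral-decomposition argument does not transfer to that setting as directly.
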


 \begin{proof}
 To prove this result we may use the Solovay Kitaev theorem and standard perturbation theory. Suppose that exists a quantum circuit $U$ such that
 \begin{equation}
 U\ket{0}^{\otimes n} = \ket{\psi_0} .
 \end{equation}
 Then, from the Solovay-Kitaev theorem it is possible to find an $\varepsilon$-approximation $\tilde{U}$  for $U$ using $\mathcal{O}(log^c(1/\varepsilon ))$ gates from our set $\mathcal{G}$. The approximated $\tilde{U}$ can be expressed as $\tilde{U} = e^{-i\varepsilon A }U$ for some bounded Hermitian matrix $A$ ($ \Vert A \Vert <1$). Expanding $\tilde{U}$ with the usual definition of the matrix exponentiation to the first order on $\varepsilon$, we compute the approximated state $\ket {\tilde{{\psi_0}} } $ of the exact groundstate 
 \begin{align} \label{perturbed}
 \ket {\tilde{{\psi_0}} } = \tilde{U}\ket{0}^{\otimes n} = \ket{\psi_0}-i\varepsilon A \ket{\psi_0} + \mathcal{O}(\varepsilon^2) .
 \end{align}
 Recall that since $A$ is bounded, $\ket{\tilde{\psi_0}}$ is $\varepsilon$-close to $\ket{\psi_0}$. Finally, it suffices to compute the energy of the state $\ket {\tilde{{\psi_0}} } $ as $\tilde{E_0} = \bra{\tilde{\psi_0}} H \ket{\tilde{\psi_0}}$. Given that $E_0 = \bra{\psi_0} H \ket{\psi_0}$, then 
 \begin{equation}
 \tilde{E_0} = E_0 + \varepsilon^2 \bra{\psi_0} A  H  A \ket{\psi_0}\,.
 \end{equation}
 The terms $\mathcal{O}(\varepsilon)$ have cancelled due to the hermicity of $A$ and the change of sign produced by the conjugation of the imaginary unit $i$. Thus, the result $\vert \tilde{E_0} - E_0 \vert = \mathcal{O}(\varepsilon^2)$ follows.
 \end{proof}
 
 This result can be extended also to the Von Neumann entropy. Recall the definition of the Von Neumann entropy. Let $\mathcal{H}$ be a bipartite Hilbert space for two subsystems $A$ and $B$, i.e $\mathcal{H} = \mathcal{H}_A \otimes \mathcal{H}_B$, then, $\rho_0^A$ the reduced density matrix of a state $\ket{\psi_0}$ reads
 \begin{equation}
 \rho_0^A = Tr_B \ket{\psi_0} \bra{\psi_0}\,.
 \end{equation}
 
 The Von Neumann entropy of the bipartition can be computed as
 \begin{equation} \label{entropy}
 S_0 = -\Tr \rho_0^A \log_2 \rho_0^A = -\sum_i^{\chi} \lambda_i \log_2 \lambda_i \,,
 \end{equation}
 where $\lambda_i$ are the eigenvalues of $\rho_0^A$, and $\chi$ is the Schmidt rank.
 
 The following result will extend the relation between the error $\varepsilon$ of the approximated quantum circuit $\tilde{U}$ with the entropy error $\vert \tilde{S_0} - S_0 \vert$, where $S_0$ corresponds to the entropy of the exact ground state and $\tilde{S_0}$ to the approximated one.
 
 \begin{lemma}\label{entropy_error}
 Given a universal set of quantum gates $\mathcal{G}$ closed under inversion, a Hamiltonian $H$, and error $\varepsilon >0 $ it is possible to find a quantum circuit $\tilde{U}$ such that it can simulate an approximation  for the ground state $\ket{\tilde{\psi_0}}$, with an error of the Von Neumann entropy of $\mathcal{O}(\varepsilon)$  in a gate complexity of
 \begin{equation}
 \mathcal{O}(\log^c(1/\varepsilon ))\,,
 \end{equation}
 for some constant c, $c \leq 4$.
 \end{lemma}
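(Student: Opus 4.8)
The plan is to piggy-back on the construction already used in the proof of Lemma~\ref{energy_error}. Solovay--Kitaev again furnishes an $\varepsilon$-approximation $\tilde U = e^{-i\varepsilon A}U$ with $\|A\|<1$ in $\mathcal{O}(\log^c(1/\varepsilon))$ gates, so the gate-count part of the statement is immediate and the entire content reduces to controlling how the half-chain entropy responds to the $\mathcal{O}(\varepsilon)$ deformation $\ket{\tilde\psi_0}=\ket{\psi_0}-i\varepsilon A\ket{\psi_0}+\mathcal{O}(\varepsilon^2)$. The essential point, and the reason the entropy error is only $\mathcal{O}(\varepsilon)$ rather than the $\mathcal{O}(\varepsilon^2)$ found for the energy, is that the energy enjoyed a hermiticity-driven cancellation of its linear term, whereas the entropy has no such protection and inherits a genuinely first-order correction.

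Concretely, first I would expand the reduced density matrix as $\tilde\rho_0^A=\rho_0^A+\varepsilon\,\rho_1+\mathcal{O}(\varepsilon^2)$, where $\rho_1=-i\,\Tr_B\!\left(A\ket{\psi_0}\bra{\psi_0}-\ket{\psi_0}\bra{\psi_0}A\right)$ is manifestly Hermitian and traceless. Working in the eigenbasis $\{\ket{i}\}$ of $\rho_0^A$ with eigenvalues $\lambda_i$, first-order eigenvalue perturbation theory gives $\lambda_i(\varepsilon)=\lambda_i+\varepsilon\bra{i}\rho_1\ket{i}+\mathcal{O}(\varepsilon^2)$. Substituting into $S=-\sum_i\lambda_i\log_2\lambda_i$ and differentiating once in $\varepsilon$ yields $\delta S=-\sum_i\delta\lambda_i\big(\log_2\lambda_i+\tfrac{1}{\ln 2}\big)$; the constant piece drops out because $\sum_i\delta\lambda_i=\varepsilon\,\Tr\rho_1=0$, leaving $\delta S=-\sum_i\delta\lambda_i\log_2\lambda_i=\mathcal{O}(\varepsilon)$. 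Pairing this with the $\mathcal{O}(\log^c(1/\varepsilon))$ gate count closes the argument.

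The step I expect to be delicate is the differentiation of $S$, which tacitly assumes every $\lambda_i>0$: when $\rho_0^A$ is rank deficient, $\log_2\lambda_i\to-\infty$ and the naive first-order formula diverges precisely on the newly opened Schmidt channels. I would resolve this by treating those channels separately: since $\ket{\tilde\psi_0}$ sits within $\mathcal{O}(\varepsilon)$ of $\ket{\psi_0}$, any Schmidt coefficient that was zero can acquire an amplitude at most $\mathcal{O}(\varepsilon)$ (by Weyl's inequality applied to the coefficient matrix), so the corresponding eigenvalue of $\tilde\rho_0^A$ is $\mathcal{O}(\varepsilon^2)$ and its entropy contribution is $-\mathcal{O}(\varepsilon^2)\log_2\mathcal{O}(\varepsilon^2)=\mathcal{O}\!\big(\varepsilon^2\log(1/\varepsilon)\big)=o(\varepsilon)$, hence subleading and harmless. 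As an independent, fully rigorous cross-check that avoids the spectral decomposition altogether, I would invoke the contractivity of the partial trace together with $\|\ket{\tilde\psi_0}\bra{\tilde\psi_0}-\ket{\psi_0}\bra{\psi_0}\|_1=\mathcal{O}(\varepsilon)$ to bound the trace distance $\tfrac12\|\tilde\rho_0^A-\rho_0^A\|_1=\mathcal{O}(\varepsilon)$, and then apply the Fannes--Audenaert continuity inequality; this route is cleaner but only delivers the slightly weaker $\mathcal{O}\!\big(\varepsilon\log(1/\varepsilon)\big)$ on account of its binary-entropy term, which is why I would keep the perturbative argument as the one that actually attains the advertised linear scaling.
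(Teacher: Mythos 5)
Your proposal follows essentially the same route as the paper's proof: reuse the Solovay--Kitaev construction from Lemma~\ref{energy_error}, expand $\tilde{\rho}_0^A$ to first order in $\varepsilon$, observe that (unlike the energy) the linear term survives, and propagate the $\mathcal{O}(\varepsilon)$ eigenvalue shifts through $-\sum_i\lambda_i\log_2\lambda_i$ via a Taylor expansion of the logarithm. Your treatment is in fact somewhat more careful than the paper's, which tacitly assumes all $\lambda_i>0$; your separate handling of the rank-deficient Schmidt channels and the Fannes--Audenaert cross-check close a gap the published argument leaves implicit.
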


 \begin{proof}
 Using the same construction as in Lemma \ref{energy_error}, we may find an $\varepsilon$-approximation $\tilde{U}$ of the ideal circuit, that produces an approximated state $\ket{\tilde{\psi_0}}$. In order to compute $\tilde{S_0}$ it is useful to compute the density matrix of $\ket{\tilde{\psi_0}}$,
 \begin{align}
  {\tilde{\rho}_0^A} = \rho_0^A + i\varepsilon \Tr_B(-A\ket{\psi_0}\bra{\psi_0} + \ket{\psi_0}\bra{\psi_0}A)  + \mathcal{O}(\varepsilon^2)\,.
 \end{align} 
 Note that the terms of $\mathcal{O}(\varepsilon)$ does not cancel, thus $\Vert {\tilde{\rho}_0^A} - \rho_0^A \Vert = \mathcal{O}(\varepsilon)$. Let $\lambda_0,...,\lambda_m$ and $\tilde{\lambda_0},...,\tilde{\lambda_m}$ be the eigenvalues of $\rho_0^A$ and $\tilde{\rho_0}^A$, respectively. Then, the eigenvalues can be related as
 \begin{equation}
 \tilde{\lambda_i} = \lambda_i + \varepsilon c_i + \mathcal{O}(\varepsilon^2)\,,
 \end{equation}
 where $c_i$ is some constant such that $\vert \tilde{\lambda_i} - \lambda_i \vert = \mathcal{O}(\varepsilon)$, and the terms of higher order on $\varepsilon$ are ignored. Finally, it suffices to compute the terms $\lambda_i \log_2 \lambda_i$. Expressing $\log_2 \tilde{\lambda_i} = \log_2 \left(\lambda_i(1+\varepsilon c_i /\lambda_i)\right)$, and using the Taylor expansion for the logarithm, we obtain that
 \begin{equation}
 \tilde{\lambda_i} \log_2 \tilde{\lambda_i} = \lambda_i \log_2 \lambda_i +c_i\varepsilon \log_2 \lambda_i + c_i\varepsilon + \mathcal{O}(\varepsilon^2)\, .
 \end{equation}
 Then, summing over all the terms $\tilde{\lambda_i} \log_2 \tilde{\lambda_i}$, the result $\vert \tilde{S_0} - S_0 \vert = \mathcal{O}(\varepsilon)$ follows.
 \end{proof}
 
 Hence, we may conclude that if some unitary $U$ accepts a polylogarithmic approximation $\tilde{U}$ up to some error $\mathcal{O}(\varepsilon)$, then we can approximate as well the ground state energy and the Von Neumann entropy up to $\mathcal{O}(\varepsilon^2)$ and $\mathcal{O}(\varepsilon)$, respectively.
 
   \section{Training method}  \label{sec:methods_training}
As the main building block, the classical method employed in the optimization loop was L-BFGS-B~\citep{LBFGSB}. This classical method is gradient-based and involves the estimation of the inverse Hessian matrix. We utilized the implemented version of the open-source Python package {\tt SciPy Optimize}~\cite{scipy}, and {\tt QuTiP} \cite{qutip} for the simulation of the quantum circuits.

Furthermore, we employed standard optimization techniques from tensor networks. In particular, we optimized single-parameters and single-layers, fixing the rest of the trainable elements of the ansatz. We repeated these single-parameter and single-layer optimization cycles until we reached convergence.

In addition, we used a recently proposed technique for variational quantum algorithms, called Adiabatically Assisted Variational Quantum Eigensolver (AAVQE) \cite{AAVQE}. This technique is further explained in the section below.
 
 \section{Adiabatically Assisted Variational Quantum Eigensolver} \label{sec:methods_AAVQE}
 Any potential advantage of the VQAs could be lost without practical approaches to perform the parameter optimization \cite{Mcclean, cerezo2020} due to the optimization in the high-dimensional parameter landscape. A particular proposal to try to solve this optimization problem is the AAVQE algorithm \cite{AAVQE}. The AAVQE is a strategy circumventing the convergence issue, inspired by the adiabatic theorem. The AAVQE method consists of parametrizing a Hamiltonian as 
 \begin{equation}
 H = (1-s)H_0 + sH_P
 \end{equation} 
 where $H_0$ is a Hamiltonian which ground state can be easily prepared, $H_P$ is the problem Hamiltonian, and $s\in [0,1]$ is the interpolation parameter. The interpolation parameter is used to adjust the Hamiltonian from one VQE run to the next, and the state preparation parameters at each step are initialized by the optimized parameters of the previous step.

\end{document}